\newtheorem{thm}{Theorem}
\newtheorem{cor}{Corollary}
\newtheorem{lem}{Lemma}
\newtheorem{prop}{Proposition}
\author{Janosch Döcker}
\title{\textsc{Monotone 3-Sat-$(2,2)$} is NP-complete}
\begin{document}

\maketitle

\begin{abstract}
We show that \textsc{Monotone 3-Sat} remains NP-complete if (i) each clause contains exactly three distinct variables, (ii) each clause is unique, i.e., there are no duplicates of the same clause, and (iii), amongst the clauses, each variable appears unnegated exactly twice and negated exactly twice. Darmann and Döcker~\cite{darmann19} recently showed that this variant of \textsc{Monotone 3-Sat} is either trivial or NP-complete. In the first part of the paper, we construct an unsatisfiable instance which answers one of their open questions (Challenge 1) and places the problem in the latter category. 

Then, we adapt gadgets used in the construction to (1) sketch two reductions that establish NP-completeness in a more direct way, and (2), to show that $\forall\exists$ \textsc{3-SAT} remains $\Pi_2^P$-complete for quantified Boolean formulas with the following properties: (a) each clause is monotone (i.e., no clause contains an unnegated and a negated variable) and contains exactly three distinct variables, (b) each universal variable appears exactly once unnegated and exactly once negated, (c) each existential variable appears exactly twice unnegated and exactly twice negated, and (d) the number of universal and existential variables is equal. Furthermore, we show that the variant where (b) is replaced with (b') each universal variable appears exactly twice unnegated and exactly twice negated, and where (a), (c) and (d) are unchanged, is $\Pi_2^P$-complete as well. Thereby, we improve upon two recent results by Döcker et al.~\cite{doecker19} that establish $\Pi_2^P$-completeness of these variants in the non-monotone setting.   

We also discuss a special case of \textsc{Monotone 3-Sat-$(2,2)$} that corresponds to a variant of \textsc{Not-All-Equal Sat}, and we show that all such instances are satisfiable. 
\end{abstract}

\noindent{\bf Keywords:} Monotone 3-Sat, bounded variable appearances, balanced variable appearances, quantified satisfiability, polynomial hierarchy, computational complexity.

\section{Introduction}

The satisfiability problem for Boolean formulas is one of the go-to problems when choosing a base problem for polynomial reductions. Indeed, it was the first problem shown to be NP-complete~\cite{cook71}. The seminal book by Garey and Johnson~\cite{garey79} contains a large list of known NP-complete problems and an extensive introduction into the theoretical foundation of NP-completeness. A very popular variant of the satisfiability problem is \textsc{3-SAT}, where each clause contains exactly three variables. This problem remains NP-complete even if further restrictions are imposed (see Table~\ref{table:complexity_results_overview}). In this article, we consider variants of \textsc{3-SAT} where each clause contains exactly three \emph{distinct} variables. Hence, unless we explicitly say otherwise, the considered instances have this property (the same goes for references regarding \textsc{3-SAT} variants).   

\begin{table}[h]
\centering
\begin{tabular}{|c|c|c|c|c|c|c|c|}
\hline 
\multicolumn{2}{|c|}{Clauses} & \multicolumn{4}{|c|}{Variables} & Complexity\\
\cline{1-7}
unique & monotone & E4 & 3P1N, 1P3N & 3P1N & 2P2N & \\ 
\hline 
\hline 
\checkmark &  &  &  & \checkmark &  & NP-c~\cite[Cor.\,11]{darmann19} \\ 
\hline 
\checkmark &  &  &  &  & \checkmark & NP-c~\cite[Thm.\,1]{berman03}\\ 
\hline 
\checkmark & \checkmark & \checkmark &  &  &  & NP-c~\cite[Cor.\,4]{darmann18} \\
\hline 
\checkmark & \checkmark &  & \checkmark &  &  & NP-c~\cite[Thm.\,9]{darmann19} \\
\hline 
\checkmark & \checkmark &  &  & \checkmark &  & ? \\
\hline 
 & \checkmark &  &  &  & \checkmark & NP-c~\cite[Thm.\,5]{darmann19} \\
\hline 
\checkmark & \checkmark &  &  &  & \checkmark & NP-c (Thm.~\ref{thm:mon_3sat-(2,2)}) \\
\hline 
\end{tabular} 
\caption{Overview of complexity results for (monotone) \textsc{3-SAT}. A checkmark in the ``unique'' subcolumn means that each clause contains exactly three \emph{distinct} variables. The headings of the subcolumns in the ``Variables'' column denote the following properties: E4 := each variable appears exactly four times; 3P1N, 1P3N := each variable appears exactly four times and either exactly once unnegated or exactly once negated; 3P1N := each variable appears exactly three times unnegated and once negated; 2P2N := each variable appears exactly twice unnegated and exactly twice negated. In the last column we use the abbreviation NP-c for NP-complete. Note that we only ticked the strongest restrictions, e.g., a checkmark in the 3P1N subcolumn implies a checkmark in the two preceding subcolumns. Moreover, by symmetry we can omit the 1P3N case (identical to 3P1N).}
\label{table:complexity_results_overview}
\end{table}

Recently, Darmann and Döcker~\cite[Cor.\,2]{darmann19} showed that for each fixed $k \geq 3$ \textsc{Monotone 3-Sat} is NP-complete if each variable appears exactly $k$ times unnegated und exactly $k$ times negated. Further, they were able to prove that the case $k = 2$ is either trivial or NP-complete. In other words, finding a single unsatisfiable instance is enough to prove that the problem remains NP-complete for $k = 2$. Hence, by constructing an unsatisfiable instance for $k = 2$, we settle this case and thus, one of their open problems (Challenge~1). As the problem is trivial for $k = 1$~\cite[p.\,32]{darmann19} by a result from Tovey~\cite[Thm.\,2.4]{tovey84}, our result closes the last remaining gap for this variant of \textsc{Monotone 3-Sat}. 

The gadgets used in the construction of the unsatisfiable instance can also be used to obtain a more direct way of establishing NP-completeness for the case $k = 2$ (we describe two reductions in this article). Then, we use one of the new gadgets to show that two recent results from Döcker et al.~\cite[Thm.\,3.1 and Thm.\,3.2]{doecker19} hold even in the monotone setting. First, we show that $\forall\exists$ \textsc{3-SAT} remains $\Pi_2^P$-complete if (i) each clause is monotone (ii) each universal variable appears exactly once unnegated and exactly once negated, (iii) each existential variable appears exactly twice unnegated and exactly twice negated, and (iv) the number of universal and existential variables is equal. Second, we show that the variant where (ii) is replaced with (ii') each universal variable appears exactly twice unnegated and exactly twice negated, and where (i), (ii) and (iv) are unchanged, is $\Pi_2^P$-complete, too. 

The article is structured as follows: In Section~\ref{sec:preliminaries}, we recall important definitions and concepts. Then, in Section~\ref{sec:construction_unsatisfiable_instance}, we construct an unsatisfiable instance of \textsc{Monotone 3-Sat-$(2,2)$}. Section~\ref{sec:more_ways} contains two reductions that can be used to obtain the main result in a more direct way and one of the involved gadgets is subsequently used in Section~\ref{sec:quantified_results} to show that a restricted variant of $\forall\exists$ \textsc{3-SAT} remains $\Pi_2^P$-complete. The appendix contains proofs of two Lemmas used in Section~\ref{sec:construction_unsatisfiable_instance}, and a representation of 
\begin{itemize}
\item a gadget on which several of our results are based, and 
\item the constructed unsatisfiable instance of \textsc{Monotone 3-Sat-$(2,2)$},
\end{itemize}
which can be used to verify our results with the help of a SAT Solver (e.g., using the PySAT Toolkit~\cite{ignatiev18}). 

\section{Preliminaries}\label{sec:preliminaries}

Let $V=\{x_1,x_2,\ldots,x_n\}$ be a set of $n$ variables. We also write $X_1^i$ to denote the set $\{x_1,x_2,\ldots,x_i\}$ for $i \geq 1$. A positive literal is an element of $\mathcal{L}_+ = V$, a negative literal is an element of $\mathcal{L}_- = \{\overline{x_i} \mid x_i \in V\}$, and the set of literals is denoted by $\mathcal{L} = \mathcal{L}_+ \cup \mathcal{L}_-$. A clause is a subset of $\mathcal{L}$. We say that a clause $C_j \subseteq \mathcal{L}$ is a $k$-clause if $|C_j| = k$ and $C_j$ is monotone if $C_j \subseteq \mathcal{L}_+$ or $C_j \subseteq \mathcal{L}_-$. A Boolean formula is a set of $m$ clauses
\[
\bigcup_{j=1}^m \{C_j\}. 
\]
A Boolean formula is monotone if $C_j$ is monotone for each $j \in \{1, \ldots, m\}$. A truth assignment $\beta\colon V \rightarrow \{T, F\}$ maps each variable to the truth value $T$ (True) or $F$ (False). A formula is satisfied for a truth assignment $\beta\colon V \rightarrow \{T, F\}$ if $\beta$ sets at least one literal in each clause true (e.g., a negative literal evaluates to true if $\beta$ sets the corresponding variable false). If such a truth assignment exists, we say that the formula is satisfiable; otherwise the formula is unsatisfiable. Further, a formula is nae-satisfiable if and only if there exists a truth assignment $\beta$ that sets at least one literal in each clause true and at least one false. The main result concerns the following decision problem. 

\begin{center}
\noindent\fbox{\parbox{.95\textwidth}{
\noindent \textsc{Monotone 3-Sat-$(2,2)$}\\
\noindent{\bf Input.} A Boolean formula
$$\bigcup_{j=1}^m \{C_j\}$$
over a set  $V=\{x_1,x_2,\ldots,x_n\}$ of variables such that (i) each $C_j$ is a unique monotone 3-clause that contains exactly three \emph{distinct} variables, and (ii), amongst the clauses, each variable appears unnegated exactly twice and negated exactly twice.
\\
\noindent{\bf Question.} Does there exist a truth assignment for $V$ such that each clause of the formula is satisfied?
}}
\end{center}

\noindent {\bf Remark.} A monotone 3-clause always contains exactly three distinct variables. 

In one instance, we reduce from \textsc{Monotone 3-Sat*-$(2,2)$}~\cite{darmann19} which is the variant of \textsc{Monotone 3-Sat-$(2,2)$} where variables may appear more than once in a clause. Note that we can assume that each variable appears at most twice in a given clause, since each clause is monotone and there are only two unnegated and two negated appearances of any variable. 

\noindent {\bf Enforcers.} In the construction of an unsatisfiable instance of \textsc{Monotone 3-Sat-$(2,2)$} and the reductions after that, we make use of gadgets that enforce truth assignments to have certain properties (gadgets also go by the name of \emph{enforcers}~\cite{berman03}). As an example, we consider an enforcer introduced by Berman et al.~\cite[p.\,3]{berman03}:
\begin{align*}
\mathcal{S}(\ell_1, \ell_2, \ell_3) = &(\ell_1 \vee \overline{a} \vee b) \wedge (\ell_2 \vee \overline{b} \vee c) \wedge (\ell_3 \vee a \vee \overline{c}) \wedge {} \\ 
&(a \vee b \vee c) \wedge (\overline{a} \vee \overline{b} \vee \overline{c}), 
\end{align*}
where $a, b, c$ are new variables. The enforcer $\mathcal{S}(\ell_1, \ell_2, \ell_3)$ can not be satisfied by a truth assignment $\beta$ that sets all literals in $\{\ell_1, \ell_2, \ell_3\}$ false. On the other hand, if at least one literal in $\{\ell_1, \ell_2, \ell_3\}$ evaluates to true, we can find truth values for the variables $a, b, c$ such that all clauses of the enforcer are satisfied. In other words, $\mathcal{S}(\ell_1, \ell_2, \ell_3)$ simulates a clause but has the advantage that we can allow duplicates since each literal in $\{\ell_1, \ell_2, \ell_3\}$ ends up in a different clause (cf.~\cite[p.\,3]{berman03}). Note that this enforcer is not monotone. In this article, we construct a monotone version with 99 new variables and 133 clauses (instead of 3 new variables and 5 clauses in the setting above). 

\section{Construction of an unsatisfiable instance of \textsc{Monotone 3-Sat-$(2,2)$}}\label{sec:construction_unsatisfiable_instance}

In this section, we construct an unsatisfiable instance of \textsc{Monotone 3-Sat-$(2,2)$}. First, we construct an enforcer~$\mathcal{M}^{(i)}(u_1, \overline{u_2}, \overline{u_3})$ that, intuitively, consists of three smaller gadgets. The first gadget is only satisfiable by truth assignments for the corresponding variables that can be placed in one of two categories. Depending on the category of the truth assignment (and the restrictions imposed by them), it is not possible to find a truth assignment for the variables contained in the second or the third gadget such that all clauses are satisfied. The second and the third gadget (see Lemmas~\ref{lem:second_gadget} and~\ref{lem:third_gadget}) have been found via computer search. The basic idea of the implemented Python code is the following: start with a collection of random candidates and try to improve them by swapping literals of differenct clauses, where this operation preserves the properties of an instance of \textsc{Monotone 3-Sat-$(2,2)$} (a reduction in the number of satisfying truth assignments is considered an improvement here). We used the PySAT Toolkit~\cite{ignatiev18} to (1) obtain a list of all satisfying truth assignments for a given collection of clauses, and (2), to verify some of our constructions (see appendix). Finally, we combine several instances of the enforcer~$\mathcal{M}^{(i)}(u_1, \overline{u_2}, \overline{u_3})$ to obtain an unsatisfiable instance of \textsc{Monotone 3-Sat-$(2,2)$}. 

We start with the construction of the first gadget. Let $\mathcal{F}_2$ denote the set consisting of the following 2-clauses: 
\begin{multicols}{4}
\begin{enumerate}
\item $\{x_1, x_{2}\}$ 
\item $\{\overline{x_2}, \overline{x_3}\}$
\item $\{\overline{x_2}, \overline{x_4}\}$
\end{enumerate}
\end{multicols}
Further let  $\mathcal{F}_3$ denote the set consisting of the following 3-clauses:

\begin{multicols}{4}
\begin{enumerate}
\setcounter{enumi}{3}
\item $\{\overline{x_{3}}, \overline{x_{5}}, \overline{x_{6}}\}$
\item $\{\overline{x_{4}}, \overline{x_{5}}, \overline{x_{6}}\}$
\item $\{x_5, x_7, x_8\}$
\item $\{x_6, x_7, x_8\}$
\item $\{\overline{x_{7}}, \overline{z_1}, \overline{z_2}\}$
\item $\{\overline{x_{7}}, \overline{z_3}, \overline{z_4}\}$
\item $\{\overline{x_{8}}, \overline{z_1}, \overline{z_2}\}$
\item $\{\overline{x_{8}}, \overline{z_3}, \overline{z_4}\}$
\end{enumerate}
\end{multicols}

First, the 2-clauses in $\mathcal{F}_2$ are equivalent to the implications
\[
\overline{x_1} \Rightarrow x_2, \quad x_2 \Rightarrow \overline{x_3}, \quad x_2 \Rightarrow \overline{x_4}.
\]

Hence, if $\beta(x_1) = F$ then $\beta(x_2) = T$ and consequently $\beta(x_3) = \beta(x_4) = F$. Next, we introduce a set of clauses for which no satisfying truth assignment exists that sets $\beta(x_3) = F$ and $\beta(x_4) = F$. To this end, let $\mathcal{G}$ be the set consisting of the following 3-clauses: 

\begin{multicols}{4}
\begin{enumerate}
\setcounter{enumi}{11}
\item $\{x_3, y_1, y_2\}$
\item $\{x_3, y_3, y_4\}$
\item $\{x_4, y_5, y_6\}$
\item $\{x_4, y_7, y_8 \}$
\item $\{y_1, y_4, y_7\}$
\item $\{y_2, y_5, y_9\}$
\item $\{y_3, y_8, y_9\}$
\item $\{\overline{y_1}, \overline{y_5}, \overline{y_8}\}$
\item $\{\overline{y_1}, \overline{y_6}, \overline{y_9}\}$
\item $\{\overline{y_2}, \overline{y_3}, \overline{y_6}\}$
\item $\{\overline{y_2}, \overline{y_4}, \overline{y_8}\}$
\item $\{\overline{y_3}, \overline{y_5}, \overline{y_7}\}$
\item $\{\overline{y_4}, \overline{y_7}, \overline{y_9}\}$
\end{enumerate}
\end{multicols}

Note that for $\beta(x_3) = \beta(x_4) = F$, omitting the appearances of $x_3$ and $x_4$ in $\mathcal{G}$ has no effect on the satisfiability. We deferred the proof that the resulting instance is unsatisfiable to the appendix (see Lemma~\ref{lem:second_gadget}). Now, for at least one $x_i \in \{x_3, x_4\}$ we have $\beta(x_i) = T$ and we may assume that $\beta(x_1) = T$ and $\beta(x_2) = F$. Next, by clauses~4 and~5 we have $\beta(x_j) = F$ for at least one $x_j \in \{x_5, x_6\}$. Then, clauses~6 and~7 imply $\beta(x_k) = T$ for at least one $x_k \in \{x_7, x_8\}$. Hence, by clauses 8, 9, 10 and 11 we get two clauses~$\{F, \overline{z_1}, \overline{z_2}\}$ and~$\{F, \overline{z_3}, \overline{z_4}\}$ which is equivalent to $\{\overline{z_1}, \overline{z_2}\}$ and~$\{\overline{z_3}, \overline{z_4}\}$. Recalling that $\beta(x_1) = T$ and $\beta(x_2) = F$, the first three clauses in the following set~$\mathcal{H}$ of 3-clauses evaluate to $\{F, \overline{z_5}, \overline{z_6}\}$, $\{F, \overline{z_7}, \overline{z_8}\}$ and $\{F, z_7, z_{15}\}$, respectively.

\begin{multicols}{3}
\begin{enumerate}
\setcounter{enumi}{24}
\item $\{\overline{x_1}, \overline{z_5}, \overline{z_6}\}$
\item $\{\overline{x_1}, \overline{z_7}, \overline{z_8}\}$
\item $\{x_2, z_7, z_{15}\}$
\item $\{z_1, z_6, z_8\}$
\item $\{z_1, z_{11}, z_{12}\}$
\item $\{z_2, z_{6}, z_{8}\}$
\item $\{z_2, z_{11}, z_{12}\}$
\item $\{z_3, z_{5}, z_{9}\}$
\item $\{z_3, z_{13}, z_{14}\}$
\item $\{z_4, z_{5}, z_{14}\}$
\item $\{z_4, z_{9}, z_{10}\}$
\item $\{z_7, z_{10}, z_{13}\}$
\item $\{\overline{z_{5}}, \overline{z_{8}}, \overline{z_{15}}\}$
\item $\{\overline{z_6}, \overline{z_7}, \overline{z_9}\}$
\item $\{\overline{z_9}, \overline{z_{11}}, \overline{z_{13}}\}$
\item $\{\overline{z_{10}}, \overline{z_{11}}, \overline{z_{14}}\}$
\item $\{\overline{z_{10}}, \overline{z_{12}}, \overline{z_{14}}\}$
\item $\{\overline{z_{12}}, \overline{z_{13}}, \overline{z_{15}}\}$
\end{enumerate}
\end{multicols}

Now, the inferred 2-clauses 
\[
\{\overline{z_1}, \overline{z_2}\},\, \{\overline{z_3}, \overline{z_4}\},\, \{\overline{z_5}, \overline{z_6}\},\, \{\overline{z_7}, \overline{z_8}\} \text{ and } \{z_7, z_{15}\}
\]
in conjunction with the clauses $\mathcal{H} \setminus \{\{\overline{x_1}, \overline{z_5}, \overline{z_6}\}, \{\overline{x_1}, \overline{z_7}, \overline{z_8}\}, \{x_2, z_7, z_{15}\}\}$ are unsatisfiable (again, the proof is deferred to the appendix; see Lemma~\ref{lem:third_gadget}). 

Hence, the constructed set of 42 clauses 
\[
\mathcal{M} := \{\{x_1, x_{2}\}, \{\overline{x_2}, \overline{x_3}\}, \{\overline{x_2}, \overline{x_4}\}\} \cup \mathcal{F}_3 \cup \mathcal{G} \cup \mathcal{H}
\]
over the set of variables $V := X_1^8 \cup Y_1^9 \cup Z_1^{15}$ is unsatisfiable. We note that each literal appears at most twice in $\mathcal{M}$. The only variables that appear less than 4 times are $x_1, x_5, x_6, y_6$ and $z_{15}$ each of which appear once unnegated and twice negated. Consider the following enforcer
\[
\mathcal{M}^{(i)}(u_1, \overline{u_2}, \overline{u_3}) := \{\{x_1^i, x_{2}^i, u_1^i\}, \{\overline{x_2^i}, \overline{x_3^i}, \overline{u_2^i}\}, \{\overline{x_2^i}, \overline{x_4^i}, \overline{u_3^i}\}\} \cup \mathcal{F}_3^i \cup \mathcal{G}^i \cup \mathcal{H}^i,
\]
where $\mathcal{F}_3^i, \mathcal{G}^i, \mathcal{H}^i$ is obtained from $\mathcal{F}_3, \mathcal{G}, \mathcal{H}$ by replacing each variable, say $v$, with $v^i$ (e.g. $z_1$ is replaced with $z_1^i$). The enforcer $\mathcal{M}^{(i)}(u_1, \overline{u_2}, \overline{u_3})$ has two properties that we use to construct an unsatisfiable instance of \textsc{Monotone 3-Sat-$(2,2)$}. First, as alluded to in Section~\ref{sec:preliminaries}, we can deal with duplicates in a clause $\{u_1, \overline{u_2}, \overline{u_3}\}$, i.e., if $\overline{u_2} = \overline{u_3}$ and, second, we can transform a mixed clause into a monotone clause. Further, we obtain a second enforcer $\overline{\mathcal{M}}^{(i)}(\overline{u_1}, u_2, u_3)$ by negating every literal in $\mathcal{M}^{(i)}(u_1, \overline{u_2}, \overline{u_3})$. 

It is easy to verify that the following collection of clauses is unsatisfiable (we do not use set notation here since the clauses contain duplicates):
\begin{align*}
(\overline{a} \vee \overline{d} \vee \overline{f}) &\wedge (b \vee d \vee e) \wedge (e \vee \overline{b} \vee \overline{b}) \wedge (d \vee \overline{f} \vee \overline{c}) \\ 
&\wedge (a \vee \overline{c} \vee \overline{e}) \wedge (\overline{e} \vee c \vee c) \wedge (\overline{d} \vee a \vee b) \wedge (\overline{a} \vee f \vee f). 
\end{align*}

Now, we are in a position to construct an unsatisfiable instance~$\mathcal{U}$ of \textsc{Monotone 3-Sat-$(2,2)$}: 

\begin{align*}
\mathcal{U} := \{\{\overline{a}, \overline{d}, \overline{f}\}, \{b, d, e\}\} &\cup \mathcal{M}^{(1)}(e, \overline{b}, \overline{b}) \cup \mathcal{M}^{(2)}(d, \overline{f}, \overline{c}) \cup \mathcal{M}^{(3)}(a, \overline{c}, \overline{e}) \\
&\cup \overline{\mathcal{M}}^{(4)}(\overline{e}, c, c) \cup \overline{\mathcal{M}}^{(5)}(\overline{d}, a, b) \cup \overline{\mathcal{M}}^{(6)}(\overline{a}, f, f) \\
&\cup \bigcup_{i \in \{1,5,6\}} \{\{x_i^1, x_i^2, x_i^3\}, \{\overline{x_i^4}, \overline{x_i^5}, \overline{x_i^6}\}\}\\
&\cup \{\{y_6^1, y_6^2, y_6^3\}, \{\overline{y_6^4}, \overline{y_6^5}, \overline{y_6^6}\}, \{z_{15}^1, z_{15}^2, z_{15}^3\}, \{\overline{z_{15}^4}, \overline{z_{15}^5}, \overline{z_{15}^6}\}\}		
\end{align*}

\begin{prop}
There is an unsatisfiable instance of \textsc{Monotone 3-Sat-$(2,2)$} with 198 variables and 264 clauses. 
\end{prop}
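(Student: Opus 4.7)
The plan is to verify three things about $\mathcal{U}$: that it contains exactly $198$ variables and $264$ clauses, that it satisfies the syntactic requirements of \textsc{Monotone 3-Sat-$(2,2)$}, and that it is unsatisfiable. Counting is easy: the six enforcers each introduce a fresh copy of $X_1^8 \cup Y_1^9 \cup Z_1^{15}$, contributing $6 \cdot 32 = 192$ variables, and together with the shared variables $a,b,c,d,e,f$ this yields $198$. Each enforcer contributes $42$ clauses, giving $252$; adding the two initial clauses, the six balancing clauses indexed by $i \in \{1,5,6\}$, and the four balancing clauses for $y_6$ and $z_{15}$ yields $264$. Monotonicity is built in by construction, since the clauses in $\mathcal{F}_3, \mathcal{G}, \mathcal{H}$ and the balancing/initial clauses are visibly monotone, and the three linking clauses of $\mathcal{M}^{(i)}(u_1,\overline{u_2},\overline{u_3})$ extend positive (resp.\ negative) $2$-clauses of $\mathcal{F}_2$ by a positive (resp.\ negative) literal. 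Uniqueness is immediate because the enforcer variable sets are pairwise disjoint by superscript, and the remaining clauses contain three distinct variables each.

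The substantive bookkeeping is the $(2,2)$-property. By the remark following the definition of $\mathcal{M}$, the only underrepresented variables of $\mathcal{M}$ are $x_1,x_5,x_6,y_6,z_{15}$, each appearing once unnegated and twice negated. Hence for $i \in \{1,2,3\}$ the enforcer $\mathcal{M}^{(i)}$ is short one unnegated appearance of each of $x_1^i,x_5^i,x_6^i,y_6^i,z_{15}^i$, while for $i \in \{4,5,6\}$ the corresponding variables of $\overline{\mathcal{M}}^{(i)}$ are short one negated appearance; the balancing clauses are constructed precisely to supply these missing occurrences exactly once each. The six top-level variables $a,b,c,d,e,f$ must be audited separately by reading off their polarities from the arguments of the six (possibly negated) enforcers and the two initial clauses, which confirms a $(2,2)$-split for each.

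For unsatisfiability, the crucial property is that $\mathcal{M}^{(i)}(\ell_1,\ell_2,\ell_3)$ simulates the clause $\ell_1 \vee \ell_2 \vee \ell_3$: if a truth assignment sets all three literals to false, then the three linking clauses collapse to the $2$-clauses in $\mathcal{F}_2$, and the enforcer becomes a relabelled copy of $\mathcal{M}$, which is unsatisfiable by Lemmas~\ref{lem:second_gadget} and~\ref{lem:third_gadget}; the analogous statement for $\overline{\mathcal{M}}^{(i)}$ follows by negating all literals. Hence any assignment satisfying $\mathcal{U}$ also satisfies the eight-clause projection
\[
(\overline{a} \vee \overline{d} \vee \overline{f}) \wedge (b \vee d \vee e) \wedge (e \vee \overline{b} \vee \overline{b}) \wedge (d \vee \overline{f} \vee \overline{c}) \wedge (a \vee \overline{c} \vee \overline{e}) \wedge (\overline{e} \vee c \vee c) \wedge (\overline{d} \vee a \vee b) \wedge (\overline{a} \vee f \vee f)
\]
obtained by replacing each enforcer by its simulated clause, and this formula is easily shown unsatisfiable by a short case split on, say, the value of $c$. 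The main obstacle is the $(2,2)$-audit: several hundred literal occurrences need to be tallied correctly across the enforcers and balancing clauses, but the process is purely mechanical and, as the author indicates, is certified via a SAT solver.
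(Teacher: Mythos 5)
Your proposal is correct and follows exactly the route the paper takes: the paper gives no separate argument for the proposition but simply asserts it after constructing $\mathcal{U}$, relying on the unsatisfiability of $\mathcal{M}$ (established via Lemmas~\ref{lem:second_gadget} and~\ref{lem:third_gadget} together with the chain of implications from $\mathcal{F}_2$ and $\mathcal{F}_3$), the $(2,2)$-bookkeeping via the balancing clauses, and the unsatisfiability of the eight-clause core on $a,\dots,f$. Your write-up just makes explicit the counting, the $(2,2)$-audit, and the projection argument that the paper leaves to the reader (and to the SAT solver).
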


Now, with the result from Darmann and Döcker~\cite[Thm.\,4]{darmann19} we get the following theorem as a consequence of the existence of an unsatisfiable instance of \textsc{Monotone 3-Sat-$(2,2)$}. 

\begin{thm}\label{thm:mon_3sat-(2,2)}
\textsc{Monotone 3-Sat-$(2,2)$} is NP-complete.
\end{thm}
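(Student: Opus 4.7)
The plan is to derive the theorem immediately from the preceding Proposition together with the dichotomy result of Darmann and Döcker~\cite[Thm.\,4]{darmann19}, which states that \textsc{Monotone 3-Sat-$(2,2)$} is either trivial (every instance is satisfiable) or NP-complete. Since we have just exhibited the unsatisfiable instance $\mathcal{U}$, the ``trivial'' horn of the dichotomy is ruled out, leaving only NP-completeness.

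First I would note membership in NP: given a truth assignment $\beta\colon V \to \{T,F\}$, one can evaluate all $m$ clauses in time polynomial in $|V| + m$, so \textsc{Monotone 3-Sat-$(2,2)$} $\in$ NP. This part is entirely routine.

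Next I would invoke~\cite[Thm.\,4]{darmann19}, which establishes the dichotomy. For convenience I would briefly restate it: the existence of even a single unsatisfiable instance of \textsc{Monotone 3-Sat-$(2,2)$} suffices to promote the problem from ``possibly trivial'' to NP-complete. The Proposition produces such an instance, $\mathcal{U}$, which satisfies the structural constraints of \textsc{Monotone 3-Sat-$(2,2)$} (each clause is a unique monotone 3-clause on three distinct variables, and each variable occurs exactly twice unnegated and exactly twice negated); the construction explicitly pads the deficient variables $x_1,x_5,x_6,y_6,z_{15}$ of each $\mathcal{M}^{(i)}$ (and their negated counterparts in $\overline{\mathcal{M}}^{(i)}$) via the final unions so that the $(2,2)$ balance is achieved globally.

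The hard work — the actual construction and verification of $\mathcal{U}$ — has already been done in Section~\ref{sec:construction_unsatisfiable_instance} and relies on Lemmas~\ref{lem:second_gadget} and~\ref{lem:third_gadget}. Consequently, no further obstacle remains: combining NP-membership, the dichotomy of~\cite[Thm.\,4]{darmann19}, and the Proposition yields NP-completeness, completing the proof in a few lines. If anything needed elaboration, it would be a short sanity check that $\mathcal{U}$ indeed meets conditions (i) and (ii) of the problem definition, which can be done by inspection of the literal-count bookkeeping in $\mathcal{M}^{(i)}$ and $\overline{\mathcal{M}}^{(i)}$.
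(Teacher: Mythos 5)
Your proposal matches the paper's argument exactly: the theorem follows immediately by combining the Proposition (the unsatisfiable instance~$\mathcal{U}$) with the dichotomy result of Darmann and Döcker~\cite[Thm.\,4]{darmann19}, together with routine NP membership. No substantive difference.
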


Since \textsc{Monotone 3-Sat-$(k,k)$} is known to be NP-complete for each fixed $k \geq 3$~\cite[Cor.\,2]{darmann19}, we get the following corollary.  

\begin{cor}
\textsc{Monotone 3-Sat-$(k,k)$} is NP-complete for each fixed $k \geq 2$.
\end{cor}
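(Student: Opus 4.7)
The plan is a straightforward case split on $k$, since the two ranges $k=2$ and $k\geq 3$ are already covered by results either proved in this section or cited from~\cite{darmann19}. NP-membership of \textsc{Monotone 3-Sat-$(k,k)$} is immediate for every fixed $k$, as the problem is a syntactic restriction of \textsc{Sat} and any candidate assignment can be verified in polynomial time. So only NP-hardness needs to be addressed.

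First I would dispatch the base case $k=2$ by appealing directly to Theorem~\ref{thm:mon_3sat-(2,2)}, which was just established by combining the unsatisfiable instance~$\mathcal{U}$ with the dichotomy from Thm.\,4 of~\cite{darmann19}. Second, for every fixed $k\geq 3$, I would cite Cor.\,2 of Darmann and Döcker~\cite{darmann19}, which already states NP-completeness of \textsc{Monotone 3-Sat-$(k,k)$} in this range. Taking the two together covers every fixed $k\geq 2$ and yields the corollary.

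The main (and really only) obstacle was the case $k=2$, which has been resolved in the preceding theorem; once Theorem~\ref{thm:mon_3sat-(2,2)} is on the table, the corollary reduces to bookkeeping over the two complementary ranges of~$k$. No new reduction or gadget construction is required at this step.
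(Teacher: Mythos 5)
Your proposal is correct and matches the paper's argument exactly: the paper also obtains this corollary by combining Theorem~\ref{thm:mon_3sat-(2,2)} for $k=2$ with Cor.\,2 of~\cite{darmann19} for $k\geq 3$. Nothing further is needed.
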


\subsection*{A special case that is always satisfiable}

We briefly consider instances of \textsc{Monotone 3-Sat-$(k,k)$} with the property that for each clause $C = \{x, y, z\}$ the instance also contains $\overline{C} = \{\overline{x}, \overline{y}, \overline{z}\}$. Noting that this is \textsc{Monotone NAE 3-SAT} with exactly $k$ appearances of each variable, it follows that this problem is hard for $k = 4$ (see \cite[Cor.\,1]{darmann19}). 

\noindent {\bf Remark.} In the context of \textsc{NAE SAT} monotone means that negations are completely absent. This is no restriction since the two clauses $\{x, y, z\}$ and $\{\overline{x}, \overline{y}, \overline{z}\}$ impose exactly the same restrictions in this setting.

Porschen et al.~\cite[Thm.\,4]{porschen04} show that for $k = 3$ the corresponding \textsc{Monotone NAE 3-SAT} problem can be solved in linear time. In particular, they show that such an instance is nae-satisfiable if and only if the \emph{variable graph} has no component isomorphic to the complete graph $K_7$ on 7 vertices~\cite[Cor.\,4]{porschen04}. The variable graph (cf., e.g.,~\cite[p.\,2]{jain10} and~\cite[p.\,175]{porschen04}) of an instance of \textsc{NAE 3-SAT} (resp. \textsc{3-SAT}), contains a vertex for each variable and an edge between two vertices if the corresponding variables appear together in some clause of the instance. For example, the variable graph of the following instance is isomorphic to the $K_7$ and is, thus, not nae-satisfiable: 
\begin{align*}
\mathcal{U}_\text{NAE} = \{&\{x_1, x_2, x_7\}, \{x_1, x_3, x_6\}, \{x_1, x_4, x_5\}, \\ 
&\{x_2, x_3, x_4\}, \{x_2, x_5, x_6\}, \{x_3, x_5, x_7\}, \{x_4, x_6, x_7\}\}. 
\end{align*}

Let us now consider $k = 2$. We show that the property mentioned above leads to a trivial instance of \textsc{Monotone NAE 3-SAT} with exactly two appearances of each variable and, hence, \textsc{Monotone 3-Sat-$(2,2)$} is always satisfiable if clauses always appear in pairs $\{C, \overline{C}\}$. Jain~\cite[p.\,2]{jain10} observed that instances of \textsc{Monotone NAE 3-SAT} are in P if the variable graph is 4-colorable. Indeed, such instances are trivial since we can associate each truth value with exactly two colors such that a 4-coloring corresponds to a truth assignment that sets at least one variable of each clause false and at least one true (since each clause contains exactly three distinct variables, all clauses are satisfied). Pilz~\cite[Thm.\,12]{pilz19} used an approach based on this idea to show that every instance of \textsc{Planar SAT} in which each clause contains at least three negated or at least three unnegated appearances of distinct variables is satisfiable. He transformed the incidence graph of the formula into a certain subgraph of the variable graph, showed that this transformation preserves planarity, and then applied the Four Color Theorem~\cite{appel89} to obtain a 4-coloring.  
Hence, all we need to show is that the variable graph of an instance of \textsc{Monotone NAE 3-SAT} where each variable appears exactly twice is always 4-colorable. First, observe that a vertex corresponding to a variable $x$ in the variable graph of such an instance has degree 2 if and only if $x$ is contained in two clauses
\[
\{x, y, z\}, \{x, y, z\}
\] 
for some variables $y, z$ such that $x, y, z$ are pairwise distinct (otherwise $x$ has at least three neighbours). Such clauses can simply be removed as it is trivial to nae-satisfy them. Hence, we can assume that the variable graph has no cycles and, in particular, no cycles of odd length. Furthermore, it is easy to see that each instance has a number of variables that is divisible by 3 and, hence, each connected component in the variable graph contains a number of vertices that is a multiple of 3. Now, there is no component with 3 vertices since we already removed the clauses that would result in such a subgraph (the $K_3$ is a cycle of odd length). Noting that the degree of each vertex is bounded by 4, we conclude that no component with 6 or more vertices is a complete graph. Consequently, we can assume that the variable graph of an instance of \textsc{Monotone NAE 3-SAT} does not contain a component that is a complete graph or a cycle of odd length. Hence, the variable graph is 4-colorable by Brooks' Theorem~\cite{brooks41} and we get the following theorem.

\begin{thm}
All instances of \textsc{Monotone NAE 3-SAT}, where each variable appears exactly twice, are satisfiable.
\end{thm}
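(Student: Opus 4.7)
The plan is to reduce nae-satisfiability to $4$-colorability of the variable graph and then invoke Brooks' theorem, along the lines already sketched in the text. As Jain observed, a proper $4$-coloring of the variable graph yields a nae-satisfying assignment: pair up the four colors into two pairs, associate each pair with one truth value, and assign accordingly. Since each clause consists of three pairwise distinct variables whose vertices form a triangle in the variable graph, these three vertices receive three different colors, so at least two truth values occur among them and every clause is nae-satisfied.

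Next I would clean up the instance by removing trivial duplicate clause pairs. A short case analysis shows that a variable $x$ has degree exactly $2$ in the variable graph if and only if its two appearances lie in two clauses that coincide as multisets of literals, say $\{x,y,z\}$ and $\{x,y,z\}$; in that situation both $y$ and $z$ also spend both of their appearances inside these two copies, so $\{x,y,z\}$ is an isolated $K_3$ component. Any such component can be dropped from the instance since the duplicate pair is trivially nae-satisfiable (set $x$ true and $y$ false, for instance). This deletion preserves the property that each remaining variable appears exactly twice, and leaves a variable graph in which every remaining vertex has degree $3$ or $4$.

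To finish, I would apply Brooks' theorem: a connected graph of maximum degree $\Delta$ is $\Delta$-colorable unless it is $K_{\Delta+1}$ or an odd cycle. Odd cycles are excluded immediately, because every vertex of a cycle has degree $2$ and no such vertex remains. The only complete-graph components consistent with the degree bound $\Delta \leq 4$ are $K_4$ and $K_5$. To rule these out I use a counting argument inside a single component $C$: a clause's three variables are mutually adjacent, so every clause of the instance lies entirely in one component, and therefore within $C$ one has $2|V_C| = 3|E_C|$, where $V_C$ and $E_C$ denote the variables and clauses of $C$. Hence $|V_C|$ is divisible by $3$, which rules out $K_4$ and $K_5$. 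Consequently no component violates the hypothesis of Brooks' theorem, the variable graph is $4$-colorable, and nae-satisfiability follows.

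The main subtlety will be the bookkeeping in the cleanup step: checking that deleting the duplicate clause pairs together with their three variables produces a well-formed instance in which every remaining variable still appears exactly twice, and that the divisibility-by-three property survives the restriction from the entire instance down to each connected component. Once these routine verifications are in place, Brooks' theorem closes the argument and no separate structural work is required.
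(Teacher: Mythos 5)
Your proposal is correct and follows essentially the same route as the paper: reduce to $4$-colorability of the variable graph via Jain's pairing of colors with truth values, remove the degree-$2$ vertices as isolated $K_3$ components, use the $2|V_C|=3|E_C|$ divisibility-by-$3$ count on each component together with the degree bound to exclude the Brooks exceptions, and conclude by Brooks' theorem. The only cosmetic difference is that you enumerate $K_4$ and $K_5$ explicitly and kill both by divisibility, whereas the paper dispatches $K_3$ by the cleanup and size $\geq 6$ by the degree bound; both organizations cover the same cases.
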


\begin{cor}
Let $\mathcal{I} = \bigcup_{j=1}^m \{C_j\}$ be an instance of \textsc{Monotone 3-Sat-$(2,2)$}. If the instance $\mathcal{I}$ has the property
\[
C_j \in \mathcal{I} \Rightarrow \overline{C_j} \in \mathcal{I},
\]
where $\overline{C_j}$ is obtained from $C_j$ by negating each literal, then $\mathcal{I}$ is satisfiable.  
\end{cor}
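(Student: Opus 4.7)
The plan is to translate satisfiability of the paired instance into nae-satisfiability of a smaller monotone instance and then invoke the theorem that was just proved. First I would record the basic observation that underlies the whole argument: a truth assignment $\beta$ satisfies both $C_j$ and $\overline{C_j}$ simultaneously if and only if $\beta$ sets at least one variable of $C_j$ true and at least one false, that is, if and only if $\beta$ nae-satisfies $C_j$. So under the hypothesis of the corollary, satisfying $\mathcal{I}$ is equivalent to nae-satisfying any subset of $\mathcal{I}$ that contains exactly one representative from every pair $\{C_j, \overline{C_j}\}$.

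Next I would take the canonical such subset, namely the set $\mathcal{I}^+$ consisting of all positive clauses of $\mathcal{I}$ (those contained in $\mathcal{L}_+$). Since $\mathcal{I}$ is a \textsc{Monotone 3-Sat-$(2,2)$} instance, each $C \in \mathcal{I}^+$ is a 3-clause with three distinct variables, so $\mathcal{I}^+$ is an instance of \textsc{Monotone NAE 3-SAT}. The key counting step is to verify that in $\mathcal{I}^+$ each variable appears exactly twice: a variable $x$ appears twice positively and twice negatively in $\mathcal{I}$, and by the pairing property the two positive occurrences of $x$ lie in exactly the positive clauses of the two pairs determined by the two negative occurrences of $x$ (or equivalently, each positive clause containing $x$ corresponds to a unique negated companion). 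Hence every variable appears in exactly two clauses of $\mathcal{I}^+$.

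Having done this, I would apply the previous theorem: every instance of \textsc{Monotone NAE 3-SAT} in which each variable appears exactly twice is nae-satisfiable. Thus $\mathcal{I}^+$ admits a nae-satisfying assignment $\beta$, which by the opening observation satisfies both $C_j$ and $\overline{C_j}$ for every pair in $\mathcal{I}$, so $\beta$ satisfies $\mathcal{I}$.

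The main obstacle, if any, is the bookkeeping in the middle step: making sure the pairing really does produce an instance of \textsc{Monotone NAE 3-SAT} with the $(2,2)$ structure collapsing to exactly-twice appearances in $\mathcal{I}^+$, rather than, say, the same variable appearing in a positive clause and in the positive companion of some unrelated negative clause. Since every clause in $\mathcal{I}$ is monotone and every negative clause of $\mathcal{I}$ is the companion of a positive one, this bookkeeping reduces to a straightforward count of positive occurrences, and then the theorem does all the remaining work.
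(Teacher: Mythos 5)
Your proof is correct and follows the same route as the paper: observe that satisfying the pair $\{C_j, \overline{C_j}\}$ is equivalent to nae-satisfying $C_j$, restrict to the positive clauses to get a Monotone NAE 3-SAT instance with each variable appearing exactly twice, and apply the preceding theorem. (A minor simplification: the counting step needs only monotonicity, since all unnegated occurrences of a variable must already lie in positive clauses, so the pairing property is not needed for the count, only for the reduction to nae-satisfiability.)
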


\section{More ways to obtain the main result}\label{sec:more_ways}

It is also possible to show NP-hardness of \textsc{Monotone 3-Sat-$(2,2)$} by reduction from \textsc{Monotone 3-Sat*-$(2,2)$}, for which NP-hardness was established by Darmann and Döcker~\cite[Thm.\,5]{darmann19}. To this end, let 
\[
\mathcal{N}^{(i)}(\overline{u_i}, \overline{u_i}) := \{\{x_1^i, x_{2}^i\}, \{\overline{x_2^i}, \overline{x_3^i}, \overline{u_i}\}, \{\overline{x_2^i}, \overline{x_4^i}, \overline{u_i}\}\} \cup \mathcal{F}_3^i \cup \mathcal{G}^i \cup \mathcal{H}^i,
\]
By construction, this set of clauses is not satisfied for any truth assignment $\beta$ that sets $\beta(u_1) = T$. Now, we can construct another enforcer which has exactly three positive 2-clauses:
\begin{align*}
\mathcal{S}(v_1, v_2, v_3) = &\{\{x_1^1, x_{2}^1, v_1\}, \{x_1^2, x_{2}^2, v_2\}, \{x_1^3, x_{2}^3, v_3\}\}\\
&\cup \mathcal{N}^{(1)}(\overline{u_1}, \overline{u_1})\setminus\{\{x_1^1, x_{2}^1\}\} \cup \mathcal{N}^{(2)}(\overline{u_2}, \overline{u_2})\setminus\{\{x_1^2, x_{2}^2\}\}\\
&\cup \mathcal{N}^{(3)}(\overline{u_3}, \overline{u_3})\setminus\{\{x_1^3, x_{2}^3\}\} \cup \{\{u_1, u_2, u_3\}\}\\
&\cup \bigcup_{i \in \{1,5,6\}} \{\{x_i^1, x_i^2, x_i^3\}\}\\
&\cup \{\{y_6^1, z_{15}^1, u_1\}, \{y_6^2, z_{15}^2, u_2\}, \{y_6^3, z_{15}^3, u_3\}\}
\end{align*}
Let $V_\mathcal{S}$ denote the set of variables that appear in $\mathcal{S}(v_1, v_2, v_3)$. Each variable $v \in V_\mathcal{S}\setminus\{v_1, v_2, v_3\}$ appears exactly twice unnegated and twice negated. For each instance of $\mathcal{S}(v_1, v_2, v_3)$, we create new variables $V_\mathcal{S}\setminus\{v_1, v_2, v_3\}$ (we omitted additional indices to improve readability). By negating each literal in $v \in V_\mathcal{S}\setminus\{v_1, v_2, v_3\}$ we obtain a second enforcer $\overline{\mathcal{S}}(\overline{v_1}, \overline{v_2}, \overline{v_3})$. By construction, the enforcer~$\mathcal{S}(v_1, v_2, v_3)$ has no satisfying truth assignment $\beta$ with $\beta(v_1) = \beta(v_2) = \beta(v_3) = F$. On the other hand, if $\beta(v_i) = T$ for at least one $v_i \in \{v_1, v_2, v_3\}$, we can assign truth values to the remaining variables of $\mathcal{S}(v_1, v_2, v_3)$ such that all clauses of the enforcer are satisfied (this is straightforward to verify with a SAT solver).

Given an instance $\mathcal{I}$ of \textsc{Monotone 3-Sat*-$(2,2)$}, we replace each positive (resp. negative) clause with a duplicate, say $(p \vee p \vee q)$ (resp. $(\overline{p} \vee \overline{p} \vee \overline{q})$), by an enforcer $\mathcal{S}(p, p, q)$ (resp. $\overline{\mathcal{S}}(\overline{p}, \overline{p}, \overline{q})$). The result is an instance of \textsc{Monotone 3-Sat-$(2,2)$} that is satisfiable if and only if $\mathcal{I}$ is satisfiable. 

Yet another approach is the following. We can also reduce from \textsc{3-Sat-$(2,2)$}, for which NP-hardness was established by Berman et al.~\cite[Thm.\,1]{berman03}, and use an extended version of the enforcers $\mathcal{M}^{(i)}(u_1, \overline{u_2}, \overline{u_3})$ and $\overline{\mathcal{M}}^{(i)}(\overline{u_1}, u_2, u_3)$ to transform mixed clauses that may be present in a given instance into monotone clauses. To this end, consider
\begin{align*}
\mathfrak{M}_{j} := &\mathcal{M}^{(3j)}(u_1, \overline{u_2}, \overline{u_3}) \cup \mathcal{M}^{(3j+1)}(u_4, \overline{u_5}, \overline{u_6}) \cup \mathcal{M}^{(3j+2)}(u_7, \overline{u_8}, \overline{u_9})\\
&\cup \{\{x_1^{3j}, x_5^{3j}, x_6^{3j}\},\{y_6^{3j}, z_{15}^{3j}, x_1^{3j+1}\}, \{x_5^{3j+1}, x_6^{3j+1}, y_6^{3j+1}\}\}\\
&\cup \{\{z_{15}^{3j+1}, x_1^{3j+2}, x_5^{3j+2}\}, \{x_6^{3j+2}, y_6^{3j+2}, z_{15}^{3j+2}\}\}
\end{align*}
Combining three instances of the enforcer $\mathcal{M}^{(i)}(u_1, \overline{u_2}, \overline{u_3})$ in this way has the advantage that each instance of $\mathfrak{M}_{j}$ introduces only variables that appear exactly twice unnegated and twice negated. A second enforcer $\overline{\mathfrak{M}_{j}}$ is again obtained by negating all literals. In order to be able to use these enforcers to replace all mixed clauses in a given instance of \textsc{3-Sat-$(2,2)$} we need the number of clauses with a positive (resp. negative) duplicate to be divisible by 3. This can be achieved by simply taking three copies of the original instance on pairwise disjoint sets of variables. With the help of a SAT solver it is easy to verify that $\mathfrak{M}_j$ has only satisfying truth assignments that set at least one literal in each of $\{u_1, \overline{u_2}, \overline{u_3}\}$, $\{u_4, \overline{u_5}, \overline{u_6}\}$ and $\{u_7, \overline{u_8}, \overline{u_9}\}$ true.   

\section{On a restricted variant of $\forall\exists$ \textsc{3-SAT}}\label{sec:quantified_results}

In this section, we consider the monotone variant of the following problem and show that it remains $\Pi_2^P$-complete in restricted settings. We assume the reader is familiar with basic concepts regarding the polynomial hierarchy and, in particular, with the complexity class $\Pi_2^P$. For an in-depth introduction to this theory, we refer to Stockmeyer~\cite{stockmeyer76} (see \cite{schaefer02} for a list containing many problems that are known to be $\Pi_2^P$-complete). We use the same notation defined in~\cite{{doecker19}}, e.g., for $i \leq i'$, let
\[
X_i^{i'} := \{x_i, x_{i+1}, \ldots, x_{i'}\},  
\]
and 
\[
Q X_i^{i'} := Qx_i Qx_{i+1} \cdots Qx_{i'}, \quad Q \in \{\forall, \exists\}.
\]

Let $s_1, s_2, t_1,t_2$ be four non-negative integers.     
\begin{center}
\noindent\fbox{\parbox{.95\textwidth}{
\noindent {\sc Balanced $\forall \exists$ 3-SAT-$(s_1,s_2, t_1, t_2)$}~\cite[p.\,6f]{doecker19}\\
\noindent{\bf Input.} A quantified Boolean formula
$$\forall X_1^p \exists X_{p+1}^n \bigcup_{j=1}^m \{C_j\}$$
over a set  $V=\{x_1,x_2,\ldots,x_n\}$ of variables such that (i) $n = 2p$, (ii) each $C_j$ is a 3-clause that contains three \emph{distinct} variables, and (iii), amongst the clauses, each universal variable appears unnegated exactly $s_1$ times and negated exactly $s_2$ times, and each existential variable appears unnegated exactly $t_1$ times and negated exactly $t_2$ times.
\\
\noindent{\bf Question.} For every truth assignment for $\{x_1, x_2, \ldots, x_p\}$, does there exist a truth assignment for $\{x_{p+1}, x_{p+2}, \ldots, x_n\}$ such that each clause of the formula is satisfied?
}}
\end{center}

Recently, Döcker et al.~\cite[Thm.\,3.1 and Thm.\,3.2]{doecker19} showed that {\sc Balanced $\forall \exists$ 3-SAT-$(2, 2, 2, 2)$} and {\sc Balanced $\forall \exists$ 3-SAT-$(1, 1, 2, 2)$} are both $\Pi_2^P$-complete. We use the gadgets $\mathfrak{M}_j$ and $\overline{\mathfrak{M}_j}$ to show that these results also hold for instances, where each clause is monotone (i.e., each clause consists of exactly three unnegated variables or exactly three negated variables, respectively). Since the transformation is virtually identical for both cases, we focus on the second result and mention the necessary adaption to obtain the first result. Consider an instance of {\sc Balanced $\forall \exists$ 3-SAT-$(1, 1, 2, 2)$}, i.e. a quantified Boolean formula
\[
\Phi = \forall X_1^p \exists X_{p+1}^n \varphi, 
\]
with $\varphi = \bigcup_{j=1}^m \{C_j\}$. Let $\varphi'$ and $\varphi''$ be the sets of clauses obtained from $\varphi$ by replacing $x_i$ with $y_i$ and $z_i$, respectively ($y_i$ and $z_i$ are distinct new variables). It is easy to see that the following quantified Boolean formula is a yes-instance if and only if $\Phi$ is a yes-instance. 
\[
\Phi' = \forall (X_1^p \cup Y_1^p \cup Z_1^p) \exists (X_{p+1}^n \cup Y_{p+1}^n \cup Z_{p+1}^n) (\varphi \cup \varphi' \cup \varphi'').  
\]
Now, the number of mixed clauses with two negative (resp. positive) literals is divisible by 3. Hence, we can replace such clauses in triples using $\mathfrak{M}_j$ and $\overline{\mathfrak{M}_j}$, respectively. For example, we replace first triple of mixed clauses, e.g., 
\[
\{x_i, \overline{x_j}, \overline{x_k}\}, \{y_i, \overline{y_j}, \overline{y_k}\}, \{z_i, \overline{z_j}, \overline{z_k}\}, 
\]
with the following collection of monotone clauses
\begin{align*}
\mathfrak{M}_{0} := &\mathcal{M}^{(0)}(x_i, \overline{x_j}, \overline{x_k}) \cup \mathcal{M}^{(1)}(y_i, \overline{y_j}, \overline{y_k}) \cup \mathcal{M}^{(2)}(z_i, \overline{z_j}, \overline{z_k})\\
&\cup \{\{x_1^{0}, x_5^{0}, x_6^{0}\},\{y_6^{0}, z_{15}^{0}, x_1^{1}\}, \{x_5^{1}, x_6^{1}, y_6^{1}\}\}\\
&\cup \{\{z_{15}^{1}, x_1^{2}, x_5^{2}\}, \{x_6^{2}, y_6^{2}, z_{15}^{2}\}\}.
\end{align*}

Note that we introduce $3 \cdot 32 = 96$ new existential variables with each instance of $\mathfrak{M}_j$ or $\overline{\mathfrak{M}_j}$. By construction, the resulting quantified Boolean formula $\Phi''$ is a yes-instance if and only if $\Phi'$ is a yes-instance. Since we introduced a number of existential variables that is divisible by 3, we can use multiple instances (each with new variables) of the following quantified enforcer introduced by Döcker et al.~\cite[p.\,9]{doecker19} 
\begin{align*}
Q^3 = &\{u, r, a\}, \{\overline{u}, \overline{b}, \overline{a}\}, \{v, q, b\}, {} \{\overline{v}, \overline{r}, \overline{a}\} , \{w, a, b\}, \{\overline{w}, \overline{q}, \overline{b}\},
\end{align*}
where $u, v, w, q, r$ are universal variables and $a, b$ are existential variables, to obtain a quantified Boolean formula with the same number of existential and universal variables. Since $Q^3$ is a yes-instance~\cite[Lem.\,3.2]{doecker19}, the resulting quantified Boolean formula is a yes-instance if and only if $\Phi''$ is a yes-instance. Noting that the transformation is polynomial, we get the following theorem. 

\begin{thm}
{\sc Balanced Monotone $\forall \exists$ 3-SAT-$(1, 1, 2, 2)$} is $\Pi_2^P$-complete.
\end{thm}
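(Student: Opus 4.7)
Membership in $\Pi_2^P$ is immediate from the certificate characterization. For hardness, my plan is to execute the polynomial reduction from {\sc Balanced $\forall \exists$ 3-SAT-$(1, 1, 2, 2)$} (\cite[Thm.\,3.2]{doecker19}) already sketched in the paragraphs preceding the theorem, turning that sketch into a proof by verifying the structural invariants at each of the three stages.

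In the first stage, given $\Phi = \forall X_1^p \exists X_{p+1}^n \varphi$, I would form $\Phi'$ by taking three variable-disjoint copies of $\Phi$ under a combined quantifier prefix. The logical equivalence $\Phi \equiv \Phi'$ and the preservation of the $(1,1)$ and $(2,2)$ counts are immediate; moreover, because each polarity profile is tripled, its clause count is divisible by three, and the three mixed clauses entering any single application of $\mathfrak{M}_j$ or $\overline{\mathfrak{M}_j}$ can be chosen to come one each from the three disjoint copies, hence to involve nine pairwise distinct variables.

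In the second stage, I would replace each triple of mixed clauses of matching polarity by a fresh copy of $\mathfrak{M}_j$ (for two-negative profiles) or $\overline{\mathfrak{M}_j}$ (for two-positive profiles). Equivalence rests on the characterization of $\mathfrak{M}_j$ from Section~\ref{sec:more_ways}: its satisfying assignments are exactly those in which each of the three top triples $\{u_1,\overline{u_2},\overline{u_3}\}$, $\{u_4,\overline{u_5},\overline{u_6}\}$, $\{u_7,\overline{u_8},\overline{u_9}\}$ contains a true literal. Since the $u_i$'s plug in the original literals, the pre-existing variables keep their balance counts; each of the $96$ freshly introduced existential variables appears exactly twice positively and twice negatively by the design of $\mathcal{M}^{(i)}$ together with the stitching clauses of $\mathfrak{M}_j$; and each new clause is a monotone 3-clause on three distinct variables by inspection.

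In the third stage, the existential-over-universal surplus $E - P$ of the resulting formula $\Phi''$ is divisible by three because each $\mathfrak{M}_j$ or $\overline{\mathfrak{M}_j}$ contributes $96 = 3 \cdot 32$ fresh existentials. I would therefore add $k = (E - P)/3$ variable-disjoint copies of the monotone enforcer $Q^3$, each contributing five universal variables with count $(1,1)$ and two existential variables with count $(2,2)$; because $Q^3$ is a yes-instance on fresh variables \cite[Lem.\,3.2]{doecker19}, this equalizes the two blocks without changing the truth value of the formula. The transformation is clearly polynomial. The main obstacle is the content of stage~2, namely verifying that $\mathfrak{M}_j$ really behaves as claimed and that none of its internal variables breaks the $(2,2)$ invariant; this reduces to a routine but somewhat tedious inspection of the definitions of $\mathcal{M}^{(i)}$ and $\mathfrak{M}_j$, checked against a SAT solver as in Sections~\ref{sec:construction_unsatisfiable_instance} and~\ref{sec:more_ways}.
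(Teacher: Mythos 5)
Your proposal is correct and follows the same three-stage reduction the paper sketches: triple the formula on disjoint variable sets, replace each triple of mixed clauses (one from each copy) with a fresh $\mathfrak{M}_j$ or $\overline{\mathfrak{M}_j}$, and pad the universal block with variable-disjoint copies of the monotone quantified enforcer $Q^3$ to restore $p = n - p$. Your bookkeeping ($E - P$ equals $96$ times the number of gadget instances, hence divisible by $3$; each $Q^3$ adds $5$ universals at $(1,1)$ and $2$ existentials at $(2,2)$; the five stitching clauses in $\mathfrak{M}_j$ bring the fifteen otherwise unbalanced internal variables up to $(2,2)$) matches what the paper relies on, so this is a faithful fleshing-out of the paper's argument rather than an alternative route.
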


The only difference in the reduction from {\sc Balanced $\forall \exists$ 3-SAT-$(2, 2, 2, 2)$} to obtain the first result is the last step. Here, we are not able to use the existing quantified enforcer $Q^1$ given in~\cite[p.\,9]{doecker19}, since it introduces mixed clauses. For this reason, we adapt the quantified enforcer $Q^3$ as follows
\begin{align*}
Q^1_\text{mon} = &\{u, r, a\}, \{\overline{u}, \overline{b}, \overline{a}\}, \{v, q, b\}, \{\overline{v}, \overline{r}, \overline{a}\} , \{w, a, b\}, \{\overline{w}, \overline{q}, \overline{b}\}, {} \\   
&\{u, r, c\}, \{\overline{u}, \overline{d}, \overline{c}\}, \{v, q, d\}, \{\overline{v}, \overline{r}, \overline{c}\} , \{w, c, d\}, \{\overline{w}, \overline{q}, \overline{d}\},   
\end{align*}
where $u, v, w, q, r$ are universal variables and $a, b, c, d$ are existential variables. Intuitively, we use two instances of $Q^3$ on the same universal variables but with different existential variables. Consider an arbitrary truth assignment $\beta$ for the universal variables. Since $Q^3$ is a yes-instance we can find truth values $\beta(a)$ and $\beta(b)$ such that the top eight clauses in $Q^1_\text{mon}$ are satisfied. Hence, for $\beta(c) = \beta(a)$ and $\beta(d) = \beta(b)$ we can satisfy all clauses in $Q^1_\text{mon}$. In other words, $Q^1_\text{mon}$ is a yes-instance of $\forall \exists$ \textsc{3-SAT} that introduces 5 universal variables but only 4 existential variables (each of which appears exactly twice unnegated and exactly twice negated). Now, we can use multiple instances (each with new variables) of $Q^1_\text{mon}$ to obtain a formula with the same number of existential and universal variables. Thus, we get the following theorem. 

\begin{thm}
{\sc Balanced Monotone $\forall \exists$ 3-SAT-$(2, 2, 2, 2)$} is $\Pi_2^P$-complete.
\end{thm}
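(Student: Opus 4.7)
My plan is to reduce from {\sc Balanced $\forall\exists$ 3-SAT-$(2,2,2,2)$}, which was proven $\Pi_2^P$-hard in~\cite{doecker19}; membership in $\Pi_2^P$ is immediate since the monotone variant is a syntactic restriction of the general $\forall\exists$ \textsc{3-SAT} problem. The bulk of the construction can be lifted from the $(1,1,2,2)$ reduction given just above: given an input $\Phi = \forall X_1^p \exists X_{p+1}^n\, \varphi$, I first take three disjoint copies so that the numbers of mixed clauses with two positive (respectively two negative) literals are each divisible by~$3$, and then replace each triple of such clauses by an instance of $\mathfrak{M}_j$ or $\overline{\mathfrak{M}_j}$ whose interface literals are the three literals of the triple. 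The resulting formula is monotone, equivalent to $\Phi$ under the $\forall\exists$ quantifier prefix, preserves the $(2,2)$ pattern of every original variable, and introduces only fresh existential variables that have the $(2,2)$ pattern by construction.

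The only divergence from the $(1,1,2,2)$ case is the final balancing step, since each universal variable must now appear twice unnegated and twice negated rather than once each. Writing $E$ for the number of fresh existential variables introduced by the $\mathfrak{M}_j$ and $\overline{\mathfrak{M}_j}$ enforcers, I would append $E$ pairwise disjoint copies of $Q^1_{\text{mon}}$; each copy is monotone, introduces five new universals and four new existentials all with pattern $(2,2)$, and shifts the universal-minus-existential count by $+1$, so after $E$ copies the two totals coincide. It then remains to argue that $Q^1_{\text{mon}}$ is a yes-instance of $\forall\exists$ \textsc{3-SAT}. For this I would exploit the observation that its twelve clauses split into two blocks of six on the same universal variables $u,v,w,q,r$, the first using existentials $a,b$ and the second obtained by the renaming $a \mapsto c$, $b \mapsto d$. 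Given an arbitrary assignment to $u,v,w,q,r$, Lemma~3.2 of~\cite{doecker19} (yes-instance-ness of $Q^3$) supplies $\beta(a), \beta(b)$ satisfying the first block, and setting $\beta(c) := \beta(a)$, $\beta(d) := \beta(b)$ satisfies the second block by symmetry.

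The step I expect to require the most care is the bookkeeping of variable-appearance counts, in particular the claim that the replacement of a triple of mixed clauses by $\mathfrak{M}_j$ or $\overline{\mathfrak{M}_j}$ does not disturb the $(2,2)$ pattern of any original variable. Each interface literal appears exactly once in the three boundary clauses of the enforcer, so substituting an original literal into that slot uses exactly one occurrence of the correct polarity, matching what was removed by deleting the mixed clause. Once this verification and the clause-count arithmetic from the previous step are in place, correctness and polynomiality of the reduction are routine, and the theorem follows.
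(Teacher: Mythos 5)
Your proposal is correct and takes essentially the same route as the paper: reduce from {\sc Balanced $\forall \exists$ 3-SAT-$(2,2,2,2)$}, triplicate to fix divisibility, replace mixed-clause triples with $\mathfrak{M}_j$/$\overline{\mathfrak{M}_j}$, and rebalance the quantifier counts with disjoint copies of $Q^1_{\text{mon}}$, whose yes-instance-ness you justify exactly as the paper does by invoking the yes-instance property of $Q^3$ and mirroring the satisfying values onto the fresh existentials $c,d$.
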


\appendix

\section{Proofs}

We used the PySAT Toolkit~\cite{ignatiev18} in the proofs of Lemmas~\ref{lem:second_gadget} and~\ref{lem:third_gadget} to obtain a DRUP proof~\cite{heule13} which is a certificate of unsatisfiablity. Here, we use the solver Lingeling~\cite{biere16} included in the PySAT Toolkit since it is one of the solvers that provide the option to return such a certificate of unsatisfiability. 

\begin{lem}\label{lem:second_gadget} The following set of clauses over variables $Y_1^9$ is unsatisfiable. 
\begin{multicols}{4}
\begin{enumerate}
\item $\{y_1, y_2\}$
\item $\{y_3, y_4\}$
\item $\{y_5, y_6\}$
\item $\{y_7, y_8\}$
\item $\{y_1, y_4, y_7\}$
\item $\{y_2, y_5, y_9\}$
\item $\{y_3, y_8, y_9\}$
\item $\{\overline{y_1}, \overline{y_5}, \overline{y_8}\}$
\item $\{\overline{y_1}, \overline{y_6}, \overline{y_9}\}$
\item $\{\overline{y_2}, \overline{y_3}, \overline{y_6}\}$
\item $\{\overline{y_2}, \overline{y_4}, \overline{y_8}\}$
\item $\{\overline{y_3}, \overline{y_5}, \overline{y_7}\}$
\item $\{\overline{y_4}, \overline{y_7}, \overline{y_9}\}$
\end{enumerate}
\end{multicols}
\end{lem}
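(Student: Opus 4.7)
My plan is to prove unsatisfiability by a short branching argument driven by the four 2-clauses, which provide a very natural case split. Concretely, I would first branch on the truth value of $y_1$. This is useful because the negative 3-clauses 8 and 9 each contain $\overline{y_1}$, so setting $y_1 = T$ immediately collapses them to the binary clauses $\{\overline{y_5}, \overline{y_8}\}$ and $\{\overline{y_6}, \overline{y_9}\}$; setting $y_1 = F$ forces $y_2 = T$ via clause~1, which in turn collapses clauses 10 and 11 to $\{\overline{y_3}, \overline{y_6}\}$ and $\{\overline{y_4}, \overline{y_8}\}$. In either case, a secondary branch on one of the remaining 2-clauses (clause~3 in the first case, clause~2 in the second) is enough.

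After the secondary branch, the argument is pure unit propagation. For example, in the branch $y_1 = T, y_5 = T$, we get $y_8 = F$, hence $y_7 = T$ by clause~4, hence $\overline{y_3}$ by clause~12, hence $y_4 = T$ by clause~2, hence $\overline{y_2}$ by clause~11, and then clause~13 forces $\overline{y_9}$, at which point the positive clause~7 $\{y_3, y_8, y_9\}$ is empty. Each of the four leaves of the case tree can be closed in a handful of propagation steps in this style, and the contradiction is consistently produced by emptying one of the positive 3-clauses 5, 6, or 7 (the ones that do not share a literal with any 2-clause), which makes the analysis easy to organize.

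The main obstacle is simply bookkeeping: there are thirteen clauses and nine variables, so while no single branch is deep, one has to verify that every branch terminates. I would write out the case tree explicitly, justifying each forced assignment by naming the clause it comes from, and finishing each branch by pointing to a now-empty clause. As a sanity check, the DRUP certificate produced by Lingeling via the PySAT Toolkit (as noted in the appendix's opening remark) gives an independent machine-verifiable witness, so the hand proof and the SAT-solver output corroborate one another.
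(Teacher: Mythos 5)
Your approach is genuinely different from the paper's: the paper proves Lemma~\ref{lem:second_gadget} purely by producing a machine-checkable DRUP unsatisfiability certificate with Lingeling (via the PySAT Toolkit), whereas you propose a four-leaf DPLL-style case tree closed by unit propagation. Both are valid, and a hand proof has the advantage of being verifiable without trusting a solver, though you would still need to actually write out the remaining three branches. I checked all four: branching on $y_1$ and then on clause~3 (resp.\ clause~2) does close each leaf by falsifying one of clauses~5, 6, or 7, exactly as you predict. One small slip in the branch you did work out: in the leaf $y_1=T$, $y_5=T$, after you derive $y_8=F$, $y_7=T$, $y_3=F$, $y_4=T$, clause~11 is $\{\overline{y_2},\overline{y_4},\overline{y_8}\}$ and is already \emph{satisfied} (because $\overline{y_8}$ is true), so it does not force $y_2=F$ as you claim. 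The step is superfluous, however: clause~13 with $y_4=y_7=T$ already forces $y_9=F$, after which clause~7 is empty, so the contradiction stands. Overall the plan is sound; just drop the erroneous propagation and fill in the other three leaves to make it a complete proof.
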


\begin{proof}
We can use the following Python code to obtain a DRUP proof. 
\begin{lstlisting}[frame=single]
from pysat.solvers import Lingeling
cnf = [[1, 2], [3, 4], [5, 6], [7, 8], [1, 4, 7], [2, 5, 9], 
       [3, 8, 9], [-1, -5, -8], [-1, -6, -9], [-2, -3, -6], 
       [-2, -4, -8], [-3, -5, -7], [-4, -7, -9]]
solver = Lingeling(bootstrap_with=cnf, with_proof=True)
print solver.solve()
print solver.get_proof()
solver.delete()
\end{lstlisting}
Output of the program:
\begin{lstlisting}[language=bash]
False
['-8 -7 -5 0', '-8 9 5 0', '-5 -8 0', 'd -1 -5 -8 0', '-8 9 0', 'd 5 -8 9 0', '9 0', '-4 -2 0', 'd -8 -4 -2 0', '-5 -3 0', 'd -7 -5 -3 0', '-3 -2 0', 'd -6 -3 -2 0', '-2 0', '1 0', '-6 0', '5 0', '-8 0', '-3 0', '7 0', '4 0', '0']
\end{lstlisting}
\end{proof}

\begin{lem}\label{lem:third_gadget} The following set of clauses over variables $Z_1^{15}$ is unsatisfiable.    

\begin{multicols}{3}
\begin{enumerate}
\item $\{\overline{z_1}, \overline{z_2}\}$
\item $\{\overline{z_3}, \overline{z_4}\}$
\item $\{\overline{z_5}, \overline{z_6}\}$
\item $\{\overline{z_7}, \overline{z_8}\}$
\item $\{z_7, z_{15}\}$
\item $\{z_1, z_6, z_8\}$
\item $\{z_1, z_{11}, z_{12}\}$
\item $\{z_2, z_{6}, z_{8}\}$
\item $\{z_2, z_{11}, z_{12}\}$
\item $\{z_3, z_{5}, z_{9}\}$
\item $\{z_3, z_{13}, z_{14}\}$
\item $\{z_4, z_{5}, z_{14}\}$
\item $\{z_4, z_{9}, z_{10}\}$
\item $\{z_7, z_{10}, z_{13}\}$
\item $\{\overline{z_{5}}, \overline{z_{8}}, \overline{z_{15}}\}$
\item $\{\overline{z_6}, \overline{z_7}, \overline{z_9}\}$
\item $\{\overline{z_9}, \overline{z_{11}}, \overline{z_{13}}\}$
\item $\{\overline{z_{10}}, \overline{z_{11}}, \overline{z_{14}}\}$
\item $\{\overline{z_{10}}, \overline{z_{12}}, \overline{z_{14}}\}$
\item $\{\overline{z_{12}}, \overline{z_{13}}, \overline{z_{15}}\}$

\end{enumerate}
\end{multicols}
\end{lem}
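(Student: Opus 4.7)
The plan is to mirror the proof of Lemma~\ref{lem:second_gadget} and invoke the PySAT Toolkit with the Lingeling solver in order to produce a DRUP certificate of unsatisfiability. The twenty clauses (five binary and fifteen ternary) over the variables $z_1, \ldots, z_{15}$ are encoded as a Python list of signed-integer lists exactly as in Lemma~\ref{lem:second_gadget}; the solver call is expected to return \texttt{False}, and the DRUP trace returned by the solver is then displayed. The final empty-clause line in that trace, together with the preceding lines, constitutes a polynomial-time checkable certificate that the formula has no satisfying assignment.

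As an intuition for why no satisfying assignment can exist, one may first split on $\beta(z_7)$. If $\beta(z_7) = T$, clause~4 forces $\beta(z_8) = F$, so clauses~6 and~8 collapse to $\{z_1, z_6\}$ and $\{z_2, z_6\}$. Either $\beta(z_6) = F$, which forces $\beta(z_1) = \beta(z_2) = T$ and contradicts clause~1; or $\beta(z_6) = T$, in which case clauses~3 and~16 yield $\beta(z_5) = \beta(z_9) = F$, and then clauses~10, 2, 12, 13, 18, 19 propagate $\beta(z_{11}) = \beta(z_{12}) = F$, so that clauses~7 and~9 again force $\beta(z_1) = \beta(z_2) = T$ in contradiction to clause~1.

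The main obstacle lies in the complementary branch $\beta(z_7) = F$: clause~5 only fixes $\beta(z_{15}) = T$, while clauses~15, 20 and~14 remain genuine disjunctions, and no single chain of unit propagations closes the case. A purely human proof would require several further nested case distinctions on $\beta(z_{13})$ and on whether $\beta(z_5)$ or $\beta(z_8)$ is false (as mandated by clause~15), and the resulting derivation is considerably longer than the $z_7 = T$ branch. For this reason, delegating the search to the SAT solver and attaching the resulting DRUP trace, in full analogy with the output block shown in the proof of Lemma~\ref{lem:second_gadget}, is by far the most concise and independently verifiable option, and this is the route I would take in the final write-up.
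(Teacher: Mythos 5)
Your proposal takes essentially the same approach as the paper: both proofs rely on invoking the PySAT Toolkit with the Lingeling solver to produce a DRUP certificate of unsatisfiability, exactly mirroring the treatment of Lemma~\ref{lem:second_gadget}. The additional hand-verified unit-propagation argument you give for the $\beta(z_7) = T$ branch is correct and a nice supplement, but the core certificate is identical in spirit to what the paper presents.
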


\begin{proof}
We can use the following Python code to obtain a DRUP proof. 
\begin{lstlisting}[frame=single]
from pysat.solvers import Lingeling
cnf = [[-1, -2], [-3, -4], [-5, -6], [-7, -8], [7, 15], 
       [1, 6, 8], [1, 11, 12], [2, 6, 8], [2, 11, 12], 
       [3, 5, 9], [3, 13, 14], [4, 5, 14], [4, 9, 10], 
       [7, 10, 13], [-5, -8, -15], [-6, -7, -9], 
       [-9, -11, -13], [-10, -11, -14], [-10, -12, -14], 
       [-12, -13, -15]]
solver = Lingeling(bootstrap_with=cnf, with_proof=True)
print solver.solve()
print solver.get_proof()
solver.delete()
\end{lstlisting}
Output of the program:
\begin{lstlisting}[language=bash]
False
['6 8 0', 'd 1 6 8 0', '14 10 13 0', '11 12 0', 'd 1 11 12 0', '-8 14 -13 0', '14 -13 0', 'd -8 14 -13 0', '14 10 0', 'd 13 14 10 0', '-9 10 7 0', '-9 10 0', 'd 7 -9 10 0', '-5 0', '10 9 0', 'd 4 10 9 0', '-13 -15 0', 'd -12 -13 -15 0', '-14 -10 0', 'd -11 -14 -10 0', '14 13 0', 'd 3 14 13 0', '13 7 0', 'd 10 13 7 0', '7 0', '-8 0', '6 0', '-9 0', '3 0', '10 0', '-4 0', '-14 0', '0']
\end{lstlisting}
\end{proof}

\section{Enforcer $\mathcal{M}^{(i)}(u_1, \overline{u_2}, \overline{u_3})$}

The set of clauses $\mathcal{M}$ constructed in Section~\ref{sec:construction_unsatisfiable_instance} is the basis for the enforcer $\mathcal{M}^{(i)}(u_1, \overline{u_2}, \overline{u_3})$ and thus, for several results presented in this article. To facilitate verification of our results, we provide the set $\mathcal{M}$ as a Python list:\\

\begin{scriptsize}
[[1, 2], [-2, -3], [-2, -4], [-3, -5, -6], [-4, -5, -6], [5, 7, 8], [6, 7, 8], [-7, -18, -19], [-7, -20, -21], [-8, -18, -19], [-8, -20, -21], [3, 9, 10], [3, 11, 12], [4, 13, 14], [4, 15, 16], [9, 12, 15], [10, 13, 17], [11, 16, 17], [-9, -13, -16], [-9, -14, -17], [-10, -11, -14], [-10, -12, -16], [-11, -13, -15], [-12, -15, -17], [2, 24, 32], [18, 23, 25], [18, 28, 29], [19, 23, 25], [19, 28, 29], [20, 22, 26], [20, 30, 31], [21, 22, 31], [21, 26, 27], [24, 27, 30], [-1, -22, -23], [-1, -24, -25], [-22, -25, -32], [-23, -24, -26], [-26, -28, -30], [-27, -28, -31], [-27, -29, -31], [-29, -30, -32]]
\end{scriptsize}
   
\section{Unsatisfiable instance of Monotone 3-Sat-$(2,2)$}

The unsatisfiable instance constructed in Section~\ref{sec:construction_unsatisfiable_instance} as a Python list:\\

\begin{scriptsize}
[[-193, -196, -198], [194, 196, 197], [1, 2, 197], [-2, -3, -194], [-2, -4, -194], [-3, -5, -6], [-4, -5, -6], [5, 7, 8], [6, 7, 8], [-7, -18, -19], [-7, -20, -21], [-8, -18, -19], [-8, -20, -21], [3, 9, 10], [3, 11, 12], [4, 13, 14], [4, 15, 16], [9, 12, 15], [10, 13, 17], [11, 16, 17], [-9, -13, -16], [-9, -14, -17], [-10, -11, -14], [-10, -12, -16], [-11, -13, -15], [-12, -15, -17], [2, 24, 32], [18, 23, 25], [18, 28, 29], [19, 23, 25], [19, 28, 29], [20, 22, 26], [20, 30, 31], [21, 22, 31], [21, 26, 27], [24, 27, 30], [-1, -22, -23], [-1, -24, -25], [-22, -25, -32], [-23, -24, -26], [-26, -28, -30], [-27, -28, -31], [-27, -29, -31], [-29, -30, -32], [33, 34, 196], [-34, -35, -195], [-34, -36, -198], [-35, -37, -38], [-36, -37, -38], [37, 39, 40], [38, 39, 40], [-39, -50, -51], [-39, -52, -53], [-40, -50, -51], [-40, -52, -53], [35, 41, 42], [35, 43, 44], [36, 45, 46], [36, 47, 48], [41, 44, 47], [42, 45, 49], [43, 48, 49], [-41, -45, -48], [-41, -46, -49], [-42, -43, -46], [-42, -44, -48], [-43, -45, -47], [-44, -47, -49], [34, 56, 64], [50, 55, 57], [50, 60, 61], [51, 55, 57], [51, 60, 61], [52, 54, 58], [52, 62, 63], [53, 54, 63], [53, 58, 59], [56, 59, 62], [-33, -54, -55], [-33, -56, -57], [-54, -57, -64], [-55, -56, -58], [-58, -60, -62], [-59, -60, -63], [-59, -61, -63], [-61, -62, -64], [65, 66, 193], [-66, -67, -195], [-66, -68, -197], [-67, -69, -70], [-68, -69, -70], [69, 71, 72], [70, 71, 72], [-71, -82, -83], [-71, -84, -85], [-72, -82, -83], [-72, -84, -85], [67, 73, 74], [67, 75, 76], [68, 77, 78], [68, 79, 80], [73, 76, 79], [74, 77, 81], [75, 80, 81], [-73, -77, -80], [-73, -78, -81], [-74, -75, -78], [-74, -76, -80], [-75, -77, -79], [-76, -79, -81], [66, 88, 96], [82, 87, 89], [82, 92, 93], [83, 87, 89], [83, 92, 93], [84, 86, 90], [84, 94, 95], [85, 86, 95], [85, 90, 91], [88, 91, 94], [-65, -86, -87], [-65, -88, -89], [-86, -89, -96], [-87, -88, -90], [-90, -92, -94], [-91, -92, -95], [-91, -93, -95], [-93, -94, -96], [-97, -98, -197], [98, 99, 195], [98, 100, 195], [99, 101, 102], [100, 101, 102], [-101, -103, -104], [-102, -103, -104], [103, 114, 115], [103, 116, 117], [104, 114, 115], [104, 116, 117], [-99, -105, -106], [-99, -107, -108], [-100, -109, -110], [-100, -111, -112], [-105, -108, -111], [-106, -109, -113], [-107, -112, -113], [105, 109, 112], [105, 110, 113], [106, 107, 110], [106, 108, 112], [107, 109, 111], [108, 111, 113], [-98, -120, -128], [-114, -119, -121], [-114, -124, -125], [-115, -119, -121], [-115, -124, -125], [-116, -118, -122], [-116, -126, -127], [-117, -118, -127], [-117, -122, -123], [-120, -123, -126], [97, 118, 119], [97, 120, 121], [118, 121, 128], [119, 120, 122], [122, 124, 126], [123, 124, 127], [123, 125, 127], [125, 126, 128], [-129, -130, -196], [130, 131, 193], [130, 132, 194], [131, 133, 134], [132, 133, 134], [-133, -135, -136], [-134, -135, -136], [135, 146, 147], [135, 148, 149], [136, 146, 147], [136, 148, 149], [-131, -137, -138], [-131, -139, -140], [-132, -141, -142], [-132, -143, -144], [-137, -140, -143], [-138, -141, -145], [-139, -144, -145], [137, 141, 144], [137, 142, 145], [138, 139, 142], [138, 140, 144], [139, 141, 143], [140, 143, 145], [-130, -152, -160], [-146, -151, -153], [-146, -156, -157], [-147, -151, -153], [-147, -156, -157], [-148, -150, -154], [-148, -158, -159], [-149, -150, -159], [-149, -154, -155], [-152, -155, -158], [129, 150, 151], [129, 152, 153], [150, 153, 160], [151, 152, 154], [154, 156, 158], [155, 156, 159], [155, 157, 159], [157, 158, 160], [-161, -162, -193], [162, 163, 198], [162, 164, 198], [163, 165, 166], [164, 165, 166], [-165, -167, -168], [-166, -167, -168], [167, 178, 179], [167, 180, 181], [168, 178, 179], [168, 180, 181], [-163, -169, -170], [-163, -171, -172], [-164, -173, -174], [-164, -175, -176], [-169, -172, -175], [-170, -173, -177], [-171, -176, -177], [169, 173, 176], [169, 174, 177], [170, 171, 174], [170, 172, 176], [171, 173, 175], [172, 175, 177], [-162, -184, -192], [-178, -183, -185], [-178, -188, -189], [-179, -183, -185], [-179, -188, -189], [-180, -182, -186], [-180, -190, -191], [-181, -182, -191], [-181, -186, -187], [-184, -187, -190], [161, 182, 183], [161, 184, 185], [182, 185, 192], [183, 184, 186], [186, 188, 190], [187, 188, 191], [187, 189, 191], [189, 190, 192], [1, 33, 65], [-97, -129, -161], [5, 37, 69], [-101, -133, -165], [6, 38, 70], [-102, -134, -166], [14, 46, 78], [-110, -142, -174], [32, 64, 96], [-128, -160, -192]]
\end{scriptsize}

\end{document}